\newtheorem{definition}{Definition}
\newtheorem{exam}{Theorem}
\newtheorem{example}[exam]{Example}
\newtheorem{theo}{Theorem}
\newtheorem{theorem}[theo]{Theorem}
\newtheorem{lemm}{Lemma}
\newtheorem{lemma}[lemm]{Lemma}
\newtheorem*{proo}{Proof}
\newtheorem*{proof}[proo]{Proof}
\title{A Class of Control Certificates to Ensure Reach-While-Stay for Switched Systems}
\author{
Hadi Ravanbakhsh and Sriram Sankaranarayanan
\institute{Department of Computer Science\\
University of Colorado, Boulder\\
Boulder, CO, USA}
\email{firsname.lastname@colorado.edu}
}
\begin{document}
\maketitle

\begin{abstract}
In this article, we consider the problem of synthesizing switching controllers
for temporal properties through the composition of simple
primitive reach-while-stay (RWS) properties. Reach-while-stay
properties specify that the system states starting from
an initial set $I$, must reach a goal (target) set $G$ in finite time,
while remaining inside a safe set $S$.   Our approach synthesizes
switched controllers that select between finitely many
modes to satisfy the given RWS specification. To do so, we
consider control certificates, which are Lyapunov-like functions
that represent control strategies to achieve the desired
specification. However, for RWS problems, a control Lyapunov-like
function is often hard to synthesize in a simple polynomial form. Therefore, we
combine control barrier and Lyapunov functions
with an additional compatibility condition between them. Using this approach,
the controller synthesis problem reduces to one of solving
quantified nonlinear constrained problems that are handled using
a combination of SMT solvers.  The synthesis of controllers is demonstrated
through a set of interesting numerical examples drawn from the related
work, and compared with the state-of-the-art tool SCOTS. Our evaluation
suggests that our approach is computationally feasible, and adds to
the growing body of formal approaches to controller synthesis.
\end{abstract}

\section{Introduction}\label{sec:intro}

The problem of synthesizing switching controllers for reach-while-stay
(RWS) specifications is examined in this article. RWS properties
are an important class, since we may decompose more complex temporal
specifications into a sequence of RWS specifications~\cite{kloetzer2008fully}. The
plant model is a switched system that consists of finitely many
(controllable) modes, and the dynamics for each mode are specified
using ODEs. Furthermore, we consider nonlinear ODEs for each mode,
including rational, trigonometric, and exponential functions. The goal
of the controller is to switch between the appropriate modes, so that
the resulting closed loop traces satisfy the specification.

RWS properties specify that a goal set $G$ must be reached by all
behaviors of the closed-loop system while staying inside a safe set
$S$.  Specifically, the state of the system is assumed to be
initialized to any state in the set $S$. RWS properties include safety
properties (stay inside a safe set $S$), reachability properties (reach
a goal set $G$), and ``control-to-facet''
problems~\cite{habets2004control,helwa2013monotonic}.

The controller synthesis is addressed in two phases: (a) formulating a
\emph{control certificate} whose existence guarantees the existence of a
non-Zeno switching control law for the given RWS specification, and
(b) solving for a certificate of a particular form as a feasibility
problem. The control certificates are control
Lyapunov-like functions which represent a strategy for the controller
to satisfy the specifications. Additionally, this strategy can be
effectively implemented as a feedback law using a controller that
respects min dwell time constraints. In the second phase, a
counterexample guided inductive synthesis (CEGIS) framework~\cite{solar2008program}, 
--- an approach that uses SMT solvers at its core --- is used to
discover such control certificates. However, this procedure is used
off the shelf, building upon the previous work of Ravanbakhsh et
al.~\cite{Ravanbakhsh-Others/2015/Counter-LMI}. This procedure uses a specialized solver for finding a certificate
of a given parametric form that handles quantified formulas by
alternating between a series of quantifier free formulas using
existing SMT
solvers~\cite{DeMoura+Bjorner/08/Z3,DBLP:conf/cade/GaoKC13}.


The contributions of the paper are as follows:
first, we show that a straightforward formulation of the control
certificate for the RWS problem yields an exponential number of
conditions, and hence can be computationally infeasible. Next, we
introduce a class of control certificates which (i) has a concise logical
structure that makes the problem of discovering the certificates
computationally feasible; and (ii) we show that such certificates
yield corresponding switching strategies with a min-dwell time
property unlike the conventional control certificates.
Next, we extend our approach to  the initialized RWS (IRWS) property that
additionally restricts the set of initial conditions of the system
using a class of ``control zero-ing'' barrier functions~\cite{wieland2007constructive,xu2015robustness} . Also, a
suitable formulation for these functions is provided within our
framework.
Finally, we provide numerical examples to demonstrate the effectiveness of the
method, including comparisons with recently developed state-of-art
automatic control synthesis tool SCOTS~\cite{rungger2016scots}.

\subsection{Related Work}

The broader area of temporal logic synthesis
seeks to synthesize formally guaranteed controllers
from the given plant model and specifications. 
The dominant approach is to build a discrete abstraction of the given
plant that is related to the original
system~\cite{wongpiromsarn2011tulip,liu2013synthesis,mazo2010pessoa,rungger2016scots,mouelhi2013cosyma}. Once
a suitable abstraction is found, these approaches use a
systematic temporal logic-based controller design approach over the
abstraction~\cite{thomas2002automata}. The properties of interest in
these systems include the full linear temporal logic (LTL) and an
efficiently synthesizable subset such as
GR(1)~\cite{wongpiromsarn2011tulip,liu2013synthesis}. These approaches
differ in how the abstraction can be constructed in a guaranteed
manner.  One class of approaches works by fixing a time step, gridding
the state-space, and simulating one point per
cell~\cite{mouelhi2013cosyma,mazo2010pessoa,rungger2016scots,zamani2012symbolic,tazaki2012discrete}. The
resulting abstraction, however, is not always approximately bisimilar to the original system. Nevertheless, conditions such as open loop
incremental stability of the plant can be used to obtain bisimilarity~\cite{girard2010approximately}.
Alternatively, the abstraction
can be built without time discretization~\cite{ozay2013computing,liu2013synthesis} by considering infeasible
transitions. And furthermore, the abstraction can be iteratively refined through a counter-example refinement
scheme~\cite{nilssonincremental}. Our
work here does not \emph{directly} focus on building abstractions. Rather, our focus
is on deductive approaches for a narrow class of temporal logic
properties namely RWS properties. Using our approach, control systems
for richer properties can be built from solving a series of RWS
problems.

Our approach is closely related to work of Habets et al.~\cite{habets2006reachability} and
Kloetzer et al.~\cite{kloetzer2008fully}. In these methods, an
abstraction is obtained by solving local control-to-facet problems
instead of reachability analysis. However, continuous feedback is
synthesized for each control-to-facet problem. The key difference in
this paper is that the control-to-facet problems themselves are solved
using switching. Furthermore, we consider initialized problems, where
the initial states are also restricted to belong to a set. We find
that IRWS problems can often be realized through a controller even
when the corresponding RWS problem (for which the initial condition is not
restricted) cannot be synthesized.

Another related class of solutions is based on synthesizing 
``a deductive proof of correctness" simultaneously with ``a control strategy". The
goal of these approaches also consists of finding a control
certificate, which yields a (control) strategy to guarantee the
property.  This typically takes the form of a control Lyapunov-like
function. The idea of control Lyapunov functions goes back to
Artstein~\cite{artstein1983stabilization} and
Sontag~\cite{sontag1989universal}. The problem of \emph{discovering} a
control Lyapunov function is usually formulated using bilinear matrix
inequalities (BMI) ~\cite{tan2004searching}. Also, instead of solving
such NP-hard problems, usually alternating optimization (V-K iteration
or policy iteration) is
used to conservatively find a solution
~\cite{tan2004searching,ghaoui1994synthesis}. 

Wongpiromsarn et al.~\cite{wongpiromsarn2016automata} discuss \emph{verification} of temporal logic properties using barrier certificates.
For synthesis, Xu et al.~\cite{xu2015robustness} discuss
conditions for the so-called ``control zeroing'' barrier functions for safety
and their properties.
They also, consider their combination with control Lyapunov functions. 
In this article, we provide an alternative condition that is
based on ``exponential condition'' barrier functions~\cite{kong2013exponential} and
enforcing a compatibility condition between the control actions
suggested by the control barrier and control Lyapunov functions. Also, Dimitrova
et. al. ~\cite{dimitrova2014deductive} have shown that control
certificates can be extended to address more complicated
specifications i.e. parity games. While these results show that constraint solving
based methods can be applied on more complicated specification, no method
 of finding such certificates is provided. 
 
The use of SMT solvers in control synthesis has also been
well-studied. Taly et. al
~\cite{taly2011synthesizing,taly2010switching} use a constraint solving
approach to find control certificates for reachability and safety. They adapt
a technique known as Counter-Example Guided Inductive Synthesis
(CEGIS), originally proposed for program synthesis~\cite{solar2008program},
to solve
the control problems using a combination of an SMT solver
 with numerical simulations. Ravanbakhsh et
 al.~\cite{Ravanbakhsh-Others/2015/Counter-LMI} 
propose  a combination of SMT and SDP
solvers for finding control certificates.  However, their method is
only applicable to stability or simple reachability properties, 
involving the use of a single
Lyapunov function.  In a subsequent paper, their approach is
extended to handle disturbance inputs~\cite{Ravanbakhsh-Others/2016/EMSOFT}. The use of SMT solvers to
solve for Lyapunov-like functions is used in our paper as
well. However, this paper focuses on defining a more tractable class of control
certificates for RWS problems. Furthermore, we show how these problems
can be composed for more complex temporal objectives. In particular,
our use of the CEGIS procedure is not a contribution of this paper.
Furthermore, in order to handle nonlinear systems and
also to guarantee numerical soundness of these solvers, we use the
nonlinear SMT solver 
dReal~\cite{DBLP:conf/cade/GaoKC13}. 

Huang et.  al.~\cite{huang2015controller} also propose control
certificates to solve the RWS problem for piecewise affine systems, 
using SMT solvers. Their
approach uses piecewise constant functions as control certificates and
partitions the state space into small enough cells in order to define
such functions. By using this technique, any function can be
approximated, which makes the method relatively complete. 

As mentioned earlier, past work by Habets et al. and Klutzier et
al.~\cite{habets2006reachability,kloetzer2008fully} 
build a finite abstraction by repeatedly solving control-to-facet
problems.  These problems seek to find a feedback law inside a
polytope $P$ that guarantees all the resulting trajectories
exit $P$ through a specific facet $F$ of $P$.
Habets et. al.~\cite{habets2004control} show  necessary and
sufficient conditions for the existence of a control strategy for the
control-to-facet problem on simplices. This condition is
sufficient but not necessary for polytopes. They extract a unique certificate from each
problem instance and check whether the condition holds for the
certificate. Subsequently, Roszak et al.~\cite{roszak2006necessary} and
Helwa et al.~\cite{helwa2013monotonic} extend this 
approach and solve reachability to a set of facets by introducing flow 
condition, which combined with invariant condition serves as a control 
certificate similar to those used in this paper. From the published
results,
these methods are more
efficient, but are only applicable to affine systems over polytopes. In contrast, the dynamics
in this article can be non-linear involving rational, trigonometric,
and exponential functions. In this article, we demonstrate that our method
can be used to solve such problems and it can be integrated into other
methods which build an abstraction for the system.

\section{Background}\label{sec:background}
\subsection{Notation}

Given a function $f(t)$, let $f^+(t)$ ($f^-(t)$) be the right (left) limit of $f$ at
$t$, and $\dot{f}(t)$ represent the \emph{right derivative} of $f$ at
time $t$.  For a set $S \subseteq \reals^n$, $\partial S$ and $int(S)$ are its boundary and interior, respectively.

\begin{definition}\textit{(Nondegenerate Basic Semialgebraic Set):}\label{Def:basic-semialgebraic-set}

A \emph{nondegenerate basic semialgebraic set} $K$ is a nonempty set
defined by a conjunction polynomial inequalities:
\[ K :\ \{\vx\ |\ p_{K,1}(\vx) \leq 0\ \land\ \cdots\ \land p_{K,i}(\vx) \leq 0\} \,,\]  
where $\vx \in \reals^n$. For
each $j \in [1,i]$, we define \[ H_{K,j} =
\{\vx\ |\ \vx \in K\ \land\ p_{K,j}(\vx) = 0\} \,.\]
\end{definition}

It is required that (a) each $H_{K,j}$ is nonempty, (b) the boundary $\partial K$ and the interior $int(K)$ are given by $ \bigvee_{j=1}^i H_{K,j}$ and $\bigwedge_{j=1}^i p_{K,j}(\vx) < 0$, respectively, and (c) the interior is nonempty. We use ``basic semialgebraic" and ``nondegenerate basic semialgebraic" interchangeably. 


\subsection{Switched Systems}\label{sec:switched-systems}

We consider continuous-time switched system plants, controlled by a
memoryless controller that provides continuous-time switching feedback. The state
of the plant $\Plant$ is defined by $n$ continuous variables $\vx$ in
a state space $X \subseteq \reals^n$, along with a finite set of modes
$Q=\{q_1,\ldots,q_m\}$. The trace of the system $(q(t), \vx(t))$ maps time to mode
$q(.):\reals^+ \rightarrow Q$, and state $\vx(.):\reals^+ \rightarrow
X$.
The mode $q \in Q$ is controlled by an
external switching input $q(t)$. The state of the plant inside each
mode evolves according to (time invariant) dynamics:
\begin{equation}\label{Eq:plant-dynamics}
	\dot{\vx}(t) = f_{q(t)}(\vx(t)) \,,
\end{equation}
wherein $f_q:X \rightarrow \reals^n$ is a Lipschitz continuous
function over $X$, describing the vector field of the plant for mode
$q$.  

The controller $\Ctrl$ is defined as a function $\ctrl: Q \times X \rightarrow Q$, which
given the current mode and state of the plant, decides the mode of the
plant at the next time instant. Formally:
\begin{equation}\label{Eq:ctrl-function}
	q^+(t) = \ctrl(q(t), \vx(t)) .
\end{equation}

The closed loop $\tupleof{\Plant,\Ctrl}$ produces traces $(q(t),
\vx(t))$ defined jointly by equations~\eqref{Eq:plant-dynamics} and
~\eqref{Eq:ctrl-function}.  However, care must be taken to avoid
\emph{Zenoness}, wherein the controller can switch infinitely often in
a finite time interval. Such controllers are physically
unrealizable. Therefore, we will additionally ensure that the $\ctrl$
function satisfies a \emph{minimum dwell time} requirement that
guarantees a minimum time $\delta > 0$ between mode switches.

\begin{definition}\textit{(Minimum Dwell Time):}
 A controller $\Ctrl$ has a minimum dwell time $\delta > 0$ with
 respect to a plant $\Plant$ iff for all traces
 and for all switch times $T$ ($q(T) \neq q^+(T)$,  the controller does not
 switch during the times $t \in [T, T+\delta)$: i.e, $\ctrl(q(t), \vx(t))
   = q^+(T)$ for all  $t \in [T, T+\delta)$.
       \end{definition}

Once the function $\ctrl$ is defined with a minimum dwell time
guarantee, given initial mode ($q(0)$), and initial state ($\vx(0)$),
a unique trace is defined for the system.



\paragraph{Specifications:}
Generally, specifications describe desired sequences of plant states
$\vx(t)$ over time $t \geq 0$ that we wish to control for.
In this paper, we focus on \emph{reach-while-stay} (RWS) specifications
involving three sets: \emph{initial set} $I \subseteq X$, \emph{safe set} $S \subseteq X$
and \emph{goal set} $G \subseteq X$.
\begin{definition}\textit{(Initialized Reach-While-Stay (RWS) Specification):}
  A trace $\vx(t)$ for $t \in [0,\infty)$ satisfies a reach-while-stay (RWS) specification w.r.t
    sets $\tupleof{I,S,G}$ iff whenever $\vx(0) \in I$, there exists a time $T \geq 0$
    s.t. for all $t \in [0,T)$, $\vx(t) \in S$, and $\vx(T) \in G$.
\end{definition}
In other words, whenever the system is initialized inside the set $I$,
it stays inside the safe set $S$ until it reaches the goal set $G$.
Alternatively, we may express the specification in temporal logic as
$I\ \implies\ (S\ \scr{U}\ G)$, where $\scr{U}$ is the temporal operator
``until".

We will assume that set $S$ is a compact basic semialgebraic set. Typical examples include polytopes defined by linear inequalities or ellipsoids, that can be easily checked for the properties such as compactness and nondegeneracy. Also,  sets $I$ and $G$ are compact semialgebraic sets.

The special case when $I = S$ will be called \emph{uninitialized}
RWS. Such a property simply states that the system initialized inside
the set $S$ continues to remain in $S$ until it reaches a goal state
$\vx \in G$ at some finite time instant $T$. This case is suitable for
building a finite abstraction as mentioned in Sec.~\ref{sec:intro}.

\subsection{Control Certificates}
Encoding verification and synthesis problems into (control) certificates, which are defined by a set of conditions, is a standard approach. For example Lyapunov functions have been used for ensuring stability and barrier functions are employed to reason about safety properties.
However, these functions are not usually known in advance. To discover such a function in the first place, we solve a constrained problem in which certificates are parameterized. Usually, certificates are defined over polynomials with unknown coefficients and the problem reduces to finding proper coefficients for  polynomials~\cite{prajna2002introducing,dimitrova2014deductive}. For example, to find a Lyapunov function, first, a template for Lyapunov function $V$ is chosen: $V = \sum_i \vc_\alpha \vx^\alpha$, where $\vx^\alpha$ is a monomial with degree greater than zero. Then, solving the following constrained problem yields a Lyapunov function for proving stability to origin: $(\exists \vc) \ (\forall \vx \neq 0) \left( V(\vx) > 0 \land \dot{V}(\vx) < 0	\right)$, where $\dot{V}$ is $\nabla V.f(\vx)$. In these techniques, it is essential to define \emph{control} certificate with a simple structure that can be discovered automatically. In the subsequent we combine the certificates for safety and liveness to obtain a certificate for RWS properties.

\section{RWS for Basic Semialgebraic Safe Sets}\label{sec:rws-basic}
In this section, we first focus on the uninitialized RWS problem
($I=S$) and provide solutions for the case when $S$ is a basic
nondegenerate semialgebraic set (see
Def.~\ref{Def:basic-semialgebraic-set}). 

Let $S$ be a nondegenerate basic semialgebraic sets, as in
Def.~\ref{Def:basic-semialgebraic-set}.  Let $\partial S$ be partitioned into nonempty facets $F_{1}, \ldots,$ $ F_{l_k}$. Each
facet $F_{k}$ is, in turn, defined by two sets of polynomial
inequalities $F_{k}^{<}$ of inactive constraints and $F_{k}^{=}$ of
active constraints: $F_{k}=\{ \bigwedge_{p_{S,j} \in F_{k}^{<}} p_{S,j}(\vx) < 0\ \land\ \bigwedge_{p_{S,j} \in F_{k}^{=} } p_{S,j}(\vx) =  0 \}$.

For each state on a facet and not in $G$, we require the existence of a mode $q$,
whose vector field points inside $S$. Additionally, we will require a
certificate $V$ to decrease everywhere in $S \setminus G$. For any polynomial
$p$, let $\dot{p_q}:\ (\nabla p)\cdot f_q(\vx)$. By combining conditions for safety and liveness, one can obtain the following conditions:

\begin{equation}\label{eq:rws-basic}
\begin{cases}
\vx \in int(S)\setminus G \implies (\exists\ q)\ \dot{V_q}(\vx) < - \epsilon  \\[4pt]
\vx \in F_{1} \setminus G \implies (\exists\ q)
\left( \begin{array}{c}
	 \dot{V_q}(\vx) < - \epsilon\ \land\ \bigwedge  \limits_{p \in F_{1}^{=}} \begin{array}{c} \dot{p_q}(\vx) <- \epsilon\end{array} \end{array} \right)  \\
\vdots\\
\vx \in F_{l_k} \setminus G \implies (\exists\ q)
\left(\begin{array}{c} 
\dot{V_q}(\vx) < - \epsilon\ \land\ \bigwedge  \limits_{p \in F_{l_k}^{=}} \begin{array}{c} \dot{p_q}(\vx) <- \epsilon \end{array}  \end{array} \right)
\,.
\end{cases}
\end{equation}

The first condition in Eq.~\eqref{eq:rws-basic} states that $V$ must strictly decrease
everywhere in the set $int(S) \setminus G$. The subsequent conditions
treat each facet $F_j$ of the set $S$ and posit the
existence of a mode $q$ for each state that causes the active
constraints and the function $V$ to decrease. 

However, we note that as the number of state variables increases, the
number of facets can be exponential in the number of inequalities that
define $S$~\cite{helwa2013monotonic} . This poses a serious limitation to the applicability of Eq.~\eqref{eq:rws-basic}.

Our solution to this problem, is based partly on the idea of
exponential barriers discussed by Kong et
al.~\cite{kong2013exponential}. Rather than force the vector field to
point inwards at each facet, we simply ensure that each polynomial
inequality $p_{S,j} \leq 0$ that defines $S$, satisfies a decrease
condition outside set $G$. Thus, Eq.~\eqref{eq:rws-basic} is
replaced by a simpler (relaxed) condition:

\begin{equation}\label{eq:rws-basic-exp-temp}
\begin{array}{l}
\vx \in S \setminus G \implies (\exists\ q)  \ \ \dot{V_q}(\vx) < -\epsilon \ \ \land \bigwedge_j\ \left( \begin{array}{c}(\dot{p_{S,j,q}}(\vx) + \lambda\ p_{S,j}(\vx))\ < -\epsilon
	\end{array} \right)\,.
\end{array}
\end{equation}
Here $\lambda > 0$ is a user specified parameter. This rule is a
relaxation of ~\eqref{eq:rws-basic}.  The rule is made
stronger for larger values of $\lambda$. However, larger values
of $\lambda$ can cause numerical difficulties in practice while
searching for a control certificate.

For safety constraints, we require $\dot{p}_q$ to be numerically
$\leq -\epsilon$ mainly, to avoid numerical issues. This can be restrictive for cases where $\dot{p_{S,j,q}}$ is simply zero. To go around this, we define a set of facets $J_q = \{j | (\exists x) \ \dot{p_{S,j,q}}(x) > 0\}$ for each mode $q$. Informally speaking, $J_q$ is set of all facets for which change of $p_{S,j}$ must be considered when mode $q$ is selected. Because for each facet $j \notin J_q$, $p_{S,j}$ will never increase as long as mode $q$ is selected. Then, the conditions become:
\begin{equation}\label{eq:rws-basic-exp}
\begin{array}{l}
\vx \in S \setminus G \implies (\exists\ q)  \ \ \dot{V_q}(\vx) < -\epsilon \ \ \land \bigwedge_{j \in J_q} \ \left( (\dot{p_{S,j,q}}(\vx) + \lambda\ p_{S,j}(\vx))\ < -\epsilon \right)\,.
\end{array}
\end{equation}


As mentioned earlier, the problem of control synthesis consists of two
phases. The first phase deals with the problem of \emph{finding a
control certificate} $V(\vx)$ that
satisfies~\eqref{eq:rws-basic-exp}. We use
a \emph{counter-example guided inductive synthesis} (CEGIS) framework
to find such certificates. In the second phase, a switching strategy is
extracted from the control certificate to design the final controller.
We now examine each phase, in turn.

\subsection{Discovering Control Certificates}

We now explain the CEGIS
framework that searches for a suitable control certificate $V$. To
synthesize a control certificate, we start with a parametric form
$V_\vc(\vx) = V(\vc,\vx): \sum_{i=1}^N c_i g_i(\vx)$ with some 
(nonlinear) basis functions
$g_1(\vx),$ \ldots $, g_N(\vx)$ chosen by the user, and unknown
coefficients $\vc:\ (c_1,$ \ldots $, c_N)$, s.t. $\vc \in C$ for a compact set $C \subseteq \reals^N$. The certificate $V$ is a linear function over $\vc$.

The constraints from Eq.~\eqref{eq:rws-basic-exp} become as follows:
\begin{equation}\label{eq:template-rws-basic-exp}
\begin{array}{l}
(\exists \vc \in C) \ (\forall \vx \in X) \ \vx \in S \setminus G \implies \bigvee_q 
\left(
\dot{V_q}<-\epsilon\land
 \bigwedge_{j \in J_q} \left( \dot{p_{S,j,q}}(\vx) + \lambda p_{S,j}(\vx)<-\epsilon \right)
\right) \,.
	
\end{array}
\end{equation}

The constraints in Eq.~\eqref{eq:template-rws-basic-exp} has a complex
quantifier alternation structure involving the $\exists \vc$
quantifier nested outside the $\forall \vx$ quantifier.  First, we note that
$J_q$ is computed separately and here we assume it is given. Next, we modify an
algorithm commonly used for program synthesis problems to the problem
of synthesizing the coefficients $\vc \in C$~\cite{solar2008program}.

The counterexample guided inductive synthesis (CEGIS) approach has its
roots in program synthesis, wherein it was proposed as a general
approach to solve $\exists \forall$ constraints that arise in such
problems~\cite{solar2008program}.  The key idea behind the CEGIS
approach is to find solutions to such constraints while using a
satisfiability (feasibility) solver for quantifier-free formulas that
check whether a given set of constraints without quantifiers have a
feasible solution.

Solvers like Z3 allow us to solve many different classes of
constraints with extensive support for linear arithmetic
constraints~\cite{DeMoura+Bjorner/08/Z3}.  On the other hand, general
purpose nonlinear delta-satisfiability solvers like dReal, support the
solving of quantifier-free nonlinear constraints involving
polynomials, trigonometric, and rational
functions~\cite{DBLP:conf/cade/GaoKC13}. However, the presence of quantifiers
drastically increases the complexity of solving these
constraints. Here, we briefly explain the idea of CEGIS procedure for
$\exists\forall$ constraints of the form
\[ (\exists\ \vc \in C)\ (\forall\ \vx \in X)\ \Psi(\vc,\vx) . \]
Here, $\vc$ represents the unknown coefficients of a control
certificate and $\vx$ represents the state variables of the system.
Our goal is to find one witness for $\vc$ that makes the overall
quantified formula true. The overall approach constructs, maintains,
and updates two sets iteratively:
\begin{enumerate}

\item $X_i \subseteq X $ is a finite set of \emph{witnesses}. This is explicitly represented as $X_i = \{ \vx_1, \ldots, \vx_i \}$.
\item $C_i \subseteq C$ is a (possibly infinite) subset of  available \emph{candidates}. This is implicitly represented by a constraint $\psi_i(\vc)$, s.t.
$C_i:\ \{ \vc\ \in C\ |\ \psi_i(\vc) \}$.
\end{enumerate}

In the beginning, $X_0 = \{ \}$
and $\psi_0:\ \true$ representing the set $C_0:\ C$.

At each iteration, we perform the following steps:

\noindent (a) \emph{Choose} a
candidate solution $\vc_{i+1} \in C_i$. This is achieved by checking
the feasibility of the formula $\psi_i$. Throughout this paper, we
will maintain $\psi_i$ as a \emph{linear arithmetic} formula that
involves boolean combinations of linear inequality
constraints. Solving these problems is akin to solving linear
optimization problems involving disjunctive constraints. Although the
complexity is NP-hard, solvers like Z3 integrate fast LP solvers with
Boolean satisfiability solvers to present efficient solutions~\cite{DeMoura+Bjorner/08/Z3}.

\noindent (b) \emph{Test} the current candidate. This is achieved by
testing the satisfiability of $ \lnot \Psi(\vc, \vx)$ for fixed
$\vc = \vc_{i+1}$. In doing so, we obtain a set of nonlinear constraints
over $\vx$. We wish to now check if it is feasible.

If $\lnot \Psi(\vc_{i+1},\vx)$ has no feasible solutions, then
$\Psi(\vc_{i+1},\vx)$ is true (valid) for all $\vx$. Therefore, we can
stop with $\vc = \vc_{i+1}$ as the required solution for $\vc$.

Otherwise, if $\lnot \Psi(\vc, \vx)$ is feasible for some $\vx= \vx_{i+1}$,
we add it back as a witness: $X_{i+1}:\ X_i \cup \{ \vx_{i+1}\}$. The
formula $\psi_{i+1}$ is given by
\[ \psi_{i+1}:\ \psi_i\ \land\ \Psi(\vc, \vx_{i+1}) \,.\]
Note that $\psi_{i+1}\ \implies\ \psi_i$, and $\vc_{i+1}$ is no longer
a feasible point for $\psi_{i+1}$.  The set $C_{i+1}$ described by
$\psi_{i+1}$ is:
\[ C_{i+1}:\ \{ \vc \in C\ |\ \Psi(\vc,\vx_i)\ \mbox{holds for each}\ \vx_i \in X_{i+1} \} \,.\]

The CEGIS procedure either (i) runs forever, or (ii) terminates after
$i$ iterations with a solution $\vc: \vc_{i}$, or (iii) terminates
with a set of witness points $X_i$ proving that no solution exists. 

We now provide further details of the CEGIS procedure adapted to find
a certificate that satisfies Eq.~\eqref{eq:template-rws-basic-exp}. In
the CEGIS procedure, the formula $\Psi(\vc, \vx)$ will have the following form:
\begin{equation}
\label{eq:phi}
\begin{cases}
	\vx \in R_1 \implies \varphi_1(\vc, \vx) \\
	\vx \in R_2 \implies \varphi_2(\vc, \vx) \\
	... \\
	\vx \in R_{N_j} \implies \varphi_{N_j}(\vc, \vx) \,,
\end{cases}
\end{equation}
and each $\varphi_j$ for $j = 1,\ldots, N_j$ has the form 
\begin{equation} \label{eq:phi-k-finite}
\bigvee_k \bigwedge_l \ p_{j,k,l}(\vc, \vx) > 0 \,,
\end{equation}
where $p_{j,k,l}(\vc, \vx)$ is a function linear in $\vc$ and possibly nonlinear in $\vx$, depending on the dynamics and template used for the control certificate. 
%

The CEGIS procedure involves two calls to solvers: (a) Testing
satisfiability of $\psi_i(\vc)$ and (b) Testing the satisfiability of
$ \lnot \Psi(\vc_{i+1}, \vx)$. We shall discuss each of these problems
in the following paragraphs.

\paragraph{Finding Candidate Solutions:}
Given a finite set of witnesses $X_i$, a solution exists for
$\psi_{i+1}$ iff there exists $\vc \in C$ s.t. 
\begin{equation*} \bigwedge_{\vx \in
X_i} \ \bigwedge_{j=1}^{N_j} \left( \vx \in R_j \implies \bigvee_k \bigwedge_l
p_{j,k,l}(\vc, \vx) > 0 \right)\,, \end{equation*} and since $p_{j,k,l}$ is a linear function in $\vc$, such $\vc$ can be found by solving a formula in Linear Arithmetic Theory ($\scr{LA}$).

\paragraph{Finding Witnesses:} Finding a witness for a given candidate solution $\vc_i$ involves checking the satisfiability $\lnot \Psi$. Whereas $\Psi$ is a
conjunction of $N_j$ clauses, $\lnot \Psi$ is a disjunction of
clauses.  The $j^{th}$ clause in $\lnot \Psi$ ($1 \leq j \leq N_j$) has the form 
\begin{equation} \label{eq:ce-piece} \vx \in
R_j \land \bigwedge_k \bigvee_l p_{j,k,l}(\vc_i, \vx) \leq 0 \, . \end{equation}

We will test each clause separately for satisfiability.
Assuming that $p_{j,k,l}$ is a general nonlinear function over $\vx$, SMT solvers like dReal~\cite{DBLP:conf/cade/GaoKC13} can be used to solve this over a compact set $R_j$.  Numerical SMT
solvers like dReal can either conclude that the given formula is
unsatisfiable or provide a solution to a ``nearby'' formula that is
$\delta$ close. The parameter $\delta$ is adjusted by the user. As a result, dReal can correctly conclude that the current
candidate yields a valid certificate.  On the other hand, its witness
may not be a witness for the original problem. In this case, using the
spurious witness may cause the CEGIS procedure to potentially continue
(needlessly) even when a solution $\vc_i$ has been
found. Nevertheless, the overall procedure produces a correct result
whenever it terminates with an answer.

\begin{example}
\label{ex:basic}
	This example is adopted from~\cite{nilssonincremental}. There
	are two variables and three control modes with the dynamics given
	below: \begin{align*} \begin{array}{c} \left[ \begin{array}{c} \dot{x_1} \\ \dot{x_2} \end{array} \right]
	= \left[ \begin{array}{c} -x_2-1.5x_1-0.5x_1^3 \\
	x_1 \end{array} \right] + B_q ,\  B_{q1}
	= \left[ \begin{array}{c} 0 \\ -x_2^2 +
	2 \end{array} \right],\  B_{q2} = \left[ \begin{array}{c}
	0 \\ -x_2 \end{array} \right],\  B_{q3}
	= \left[ \begin{array}{c} 2 \\
	10 \end{array} \right] \,.  \end{array} \end{align*} \\ The
	goal is to reach the target set $G: (x_1+0.75)^2 +
	(x_2-1.75)^2 \leq 0.25^2$, a circle centered at $(-0.75,1.75)$,
	as shown in Figure~\ref{fig:basic}, while staying in the
	safe region given by the rectangle $S_0: [-2,2] \times
	[-2,3]$:
	\[
	S_0 : \{ \vx | (x_1 + 2)(x_1 - 2) \leq 0 \land (x_2 + 2)(x_2 - 3) \leq 0 \} \,.
	\]

	First, we shift co-ordniates to transform $(-0.75, 1.75)$ as
	the new origin. Then, we use a quadratic template for V
	($c_1 x_1^2 + c_2 x_1 x_2 + c_3 x_2^2$) , $\epsilon = 1$,
	$\lambda = 5$. The solution $V$ is found in $5$
	iterations. Then, we translate the
	function back to the original co-ordinates:
        \begin{align*}  
        V(x_1,x_2): & 37.782349 x_1^2 -
	2.009762 x_1 x_2 + 60.190607 x_1 + 4.415093 x_2^2 -
	16.960145 x_2 + 37.411604 \,.
        \end{align*}
\end{example}

\begin{example}
	A unicycle~\cite{zamani2012symbolic} has three variables. $x$ and $y$ are position of the car and $\theta$ is its angle. The dynamics of the system is
	$\dot{x} = u_1 cos(\theta) \,, \ 
		\dot{y} = u_1 sin(\theta) \,, \ 
		\dot{\theta} = u_2$,
	where $u_1$ and $u_2$ are inputs. Assuming a switched system, we consider $u_1 \in \{-1, 0, 1\}$ and $u_2 \in \{-1, 0, 1\}$. The safe set is $[-1, 1]\times[-1, 1]\times[-\pi, \pi]$ and the target facet is $x = 1$.
	We use a template that is linear in $(x,y)$ and quadratic in $\theta$. Using $\epsilon = 0.1$ and $\lambda = 0.5$, the following CLF is found after $22$ iterations:
	\[ V(\vx) : - x - y - 0.5881 \theta +\theta^2 - 0.1956 \theta x + \theta y \,.\]
\end{example}

\begin{example}
\label{ex:basic-4D}
	This example is adopted from~\cite{habets2004control}. There are four variables and two control inputs. The dynamic is as follows:
	\begin{align*}
	\begin{array}{c}
		\left[ \begin{array}{c}
 			\dot{x_1} \\ \dot{x_2} \\ \dot{x_3} \\ \dot{x_4}
 		\end{array} \right]
 		 = \left[ \begin{array}{c}
			x_1 + x_2 + 8\\
			-x_2 + x_3 + 1 \\
			-2x_3 + 2x_4 + 1 \\
			-3x_4 + 1
		\end{array} \right] + 
		\left[  \begin{array}{c}
			u_1 \\ -u_2 \\ -2u_1 \\ u_2
		\end{array} \right] \,.
	\end{array}
	\end{align*} \\	
	The region of interest $S$ is hyber-box $[-1, 1]^4$ and the input belongs to set $[0, 1] \times [0, 2]$. The goal is to reach facet $x_1 = 1$, while staying in $S$ as the safe region. 
	
	First, we discretize the control input to model the system as a switched system. For this purpose, we assume $u_1 \in \{0, 1\}$ and $u_2 \in \{0, 0.5, 1, 1.5, 2\}$. Then, we use a linear template for the CLF ($c_1 x_1 + c_2 x_2 + c_3 x_3 + c_4 x_4$) , $\epsilon = 0.1$, $\lambda = 5$. CEGIS framework finds certificate $V(\vx): -0.13333344 (x_1+x_2+x_3+x_4)$.
\end{example}

\subsection{Control Design}
Thus far, we discussed the CEGIS framework for finding a control
certificate. Extracting the $\ctrl$ function from the certificate is
now considered.  Given a control certificate $V$ satisfying
Eq.~\eqref{eq:rws-basic-exp}, the choice of a switching mode is
dictated by a function $\eta_q(\vx)$ defined for each state $\vx \in X$
and mode $q \in Q$ as follows:
\[
\eta_q(\vx):\ \max\left(
\begin{array}{c}
\dot{V_q}(\vx),\ 
\eta_{S,1,q}(\vx),\ \cdots, \eta_{S,k,q}(\vx)
\end{array}
\right)\,,
\]
where for all $j \in J_q$, $\eta_{S,j,q}$ is $\dot{p_{S,j,q}} +\lambda p_{S,j}$ and for $j \notin J_q$, $\eta_{S,j,q} = -\infty$ or equivalently $\eta_{S,j,q} = -L$ for some large constant $L$.

The idea is that whenever (at time $t$) the controller chooses a mode
$q$ s.t. $\eta_q(\vx(t)) < -\epsilon$, one can guarantee that
$\eta_{q}(\vx(t))$ $ < 0$ holds for all $t \in [T, T+\delta)$, for some
minimum time $\delta$. Therefore, for some fixed $\epsilon_s$ ($0 < \epsilon_s < \epsilon$), the function $\ctrl$ for any $\vx \in S \setminus G$ can be
defined as
\begin{equation}
	\label{eq:controller}
	\ctrl(q, \vx) := \begin{cases} 
	\hat{q} & \mbox{if}\ 
		\left(\begin{array}{c} 
		\eta_{q}(\vx) \geq - \epsilon_s
		\land \ \eta_{\hat{q}}(\vx) < - \epsilon 
		\end{array} \right) \\ \\
	q &  \mbox{otherwise} \,.
	\end{cases}
\end{equation}

In other words, the controller state persists in a given mode $q$
until $\eta_q(\vx) \geq -\epsilon_s$. Then,
given that $\vx \in S \setminus G$, Eq.~\eqref{eq:rws-basic-exp} will
provide us a new control mode $\hat{q}$ that satisfies $\eta_{\hat{q}}(\vx) <
-\epsilon$.  This mode is chosen as the next mode to switch to.

\begin{example}
Consider once again, the problem from Ex.~\ref{ex:basic}. Using the
defined function $V(x_1,x_2)$, Eq.~\eqref{eq:controller}
yields a controller.  Figure~\ref{Fig:closed-loop-basic-example} shows
some of the simulation traces of this closed loop system, demonstrating the
RWS property.

\end{example}

\begin{figure}[H]
\begin{center}
\begin{subfigure}{.4\textwidth}
\centering
\includegraphics[width=1\textwidth]{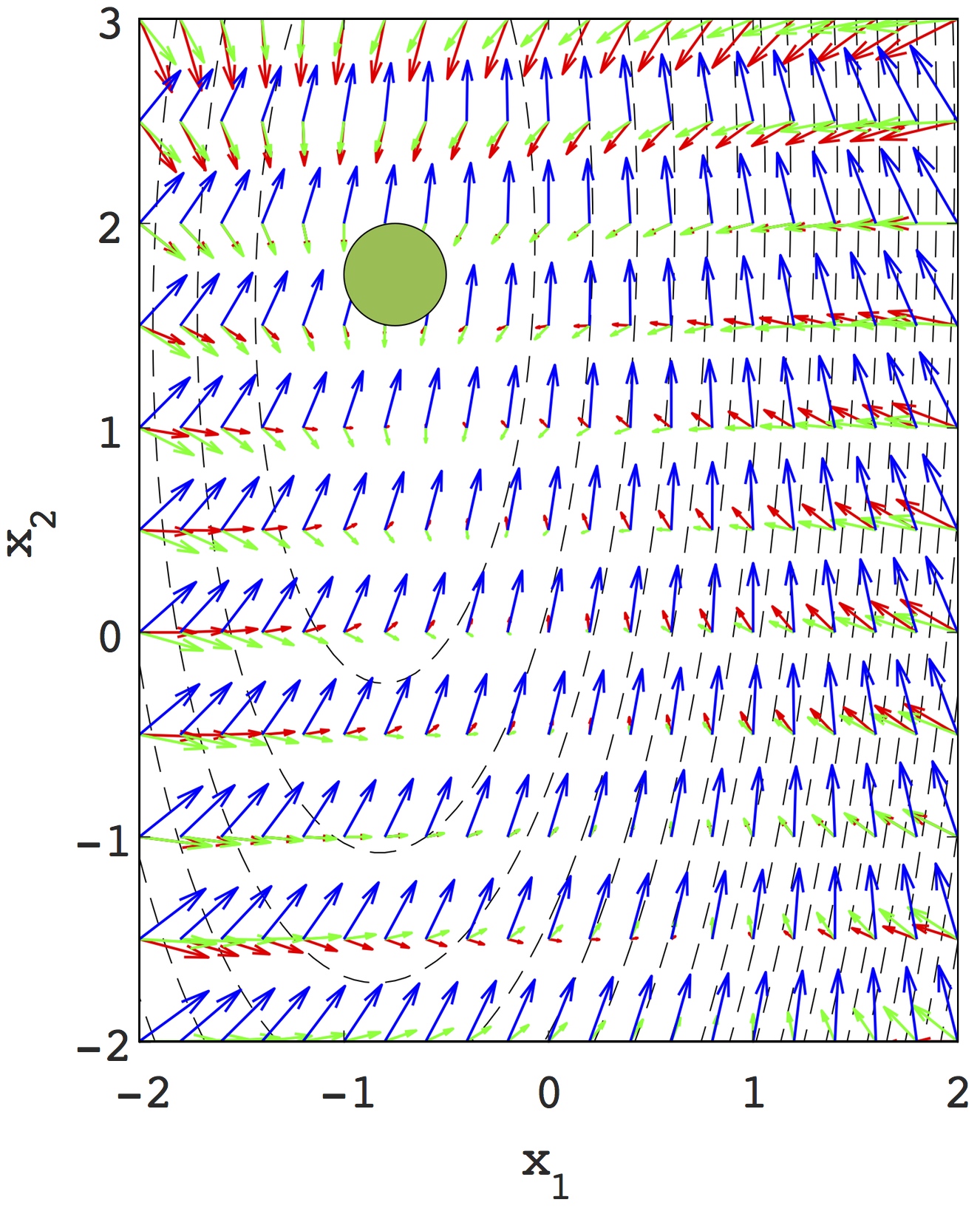}
\caption{}
\label{fig:basic}
\end{subfigure}
\qquad
\begin{subfigure}{.4\textwidth}
\centering
\includegraphics[width=1\textwidth]{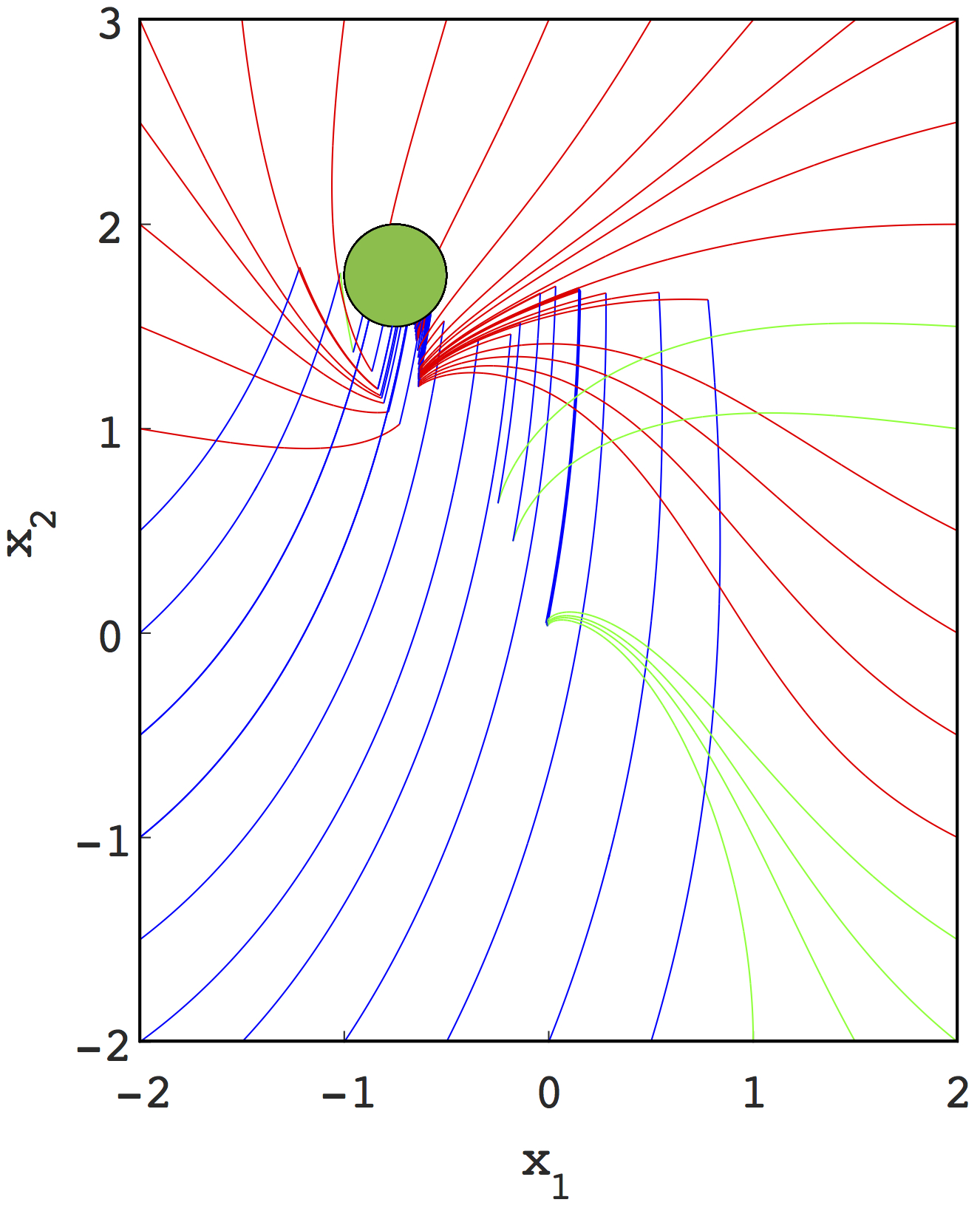}
\caption{}
\label{Fig:closed-loop-basic-example}
\end{subfigure}
\caption{(a) Region $G$ for Example~\ref{ex:basic} is shown shaded in the center, and the vector fields for modes $q_1,q_2$ and $q_3$ are shown in red, green and blue, respectively. Level-sets of $V$ are shown with black dashed lines. (b) Closed loop trajectories for Example~\ref{ex:basic} using the controller defined by Eq.~\eqref{eq:controller}. The segments shown in colors red, green and blue correspond to the modes $q_1, q_2$ and $q_3$, respectively.}
\end{center}
\end{figure}

We now establish the key result that provides a minimum dwell time guarantee.
\begin{lemma}
\label{lem:delta-exists}
There exists a $\delta > 0$ s.t. for all initial conditions $\vx(T) \in
	S\setminus G$, if $\eta_{q}(\vx(T)) < - \epsilon$, and if the mode
	of the system is set to $q$ at time $T$, then
\[(\forall t \in [T, T+\delta]) \ (\vx(t) \in
	S \setminus G)\ \implies\ \eta_{q}(\vx(t)) \leq
	- \epsilon_s \,.\]
\end{lemma}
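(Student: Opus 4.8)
The plan is to read $\eta_q$ as a continuous function of the state and to ride it along the fixed-mode flow, using compactness of $S$ to make the resulting time bound uniform. First I would fix a mode $q$ and observe that, while the controller holds mode $q$, the state obeys $\dot{\vx} = f_q(\vx)$, so $t \mapsto \vx(t)$ is continuous; since $\eta_q$ is a finite maximum of the locally Lipschitz functions $\dot{V_q}$ and $\eta_{S,j,q} = \dot{p_{S,j,q}} + \lambda p_{S,j}$ (reading the entries $j \notin J_q$ as the constant $-L$), the composition $t \mapsto \eta_q(\vx(t))$ is continuous as well. Because $\eta_q(\vx(T)) < -\epsilon < -\epsilon_s$, there is strictly positive room before $\eta_q$ can climb to $-\epsilon_s$, so for each starting point some positive waiting time exists; the whole task is to make that waiting time independent of the particular $\vx(T) \in S \setminus G$.

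To get uniform constants I would first enlarge $S$ to a fixed compact neighborhood, say $S' = \{\vx \mid \mathrm{dist}(\vx, S) \le d\}$ for a fixed $d > 0$. On the compact set $S'$, Lipschitz continuity of $f_q$ gives a uniform velocity bound $\|f_q(\vx)\| \le M_q$, and local Lipschitzness of $\eta_q$ gives a uniform Lipschitz constant $L_q$, both independent of the base point. A standard non-escape argument then confines the trajectory: as long as $\vx(s) \in S'$ on $[T,t]$ we have $\|\vx(t) - \vx(T)\| \le M_q (t-T)$, so at any first exit time $\tau$ this same bound forces $\tau \ge T + d/M_q$; hence the trajectory remains in $S'$ throughout $[T, T + d/M_q]$.

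Combining the two estimates is then routine: for $t \in [T, T + d/M_q]$,
\[ \eta_q(\vx(t)) \ \le\ \eta_q(\vx(T)) + L_q M_q (t-T) \ <\ -\epsilon + L_q M_q (t-T)\,, \]
and the right-hand side is $\le -\epsilon_s$ precisely when $t - T \le (\epsilon - \epsilon_s)/(L_q M_q)$. I would therefore set
\[ \delta \ :=\ \min_{q \in Q}\ \min\!\left( \frac{d}{M_q},\ \frac{\epsilon - \epsilon_s}{L_q M_q} \right)\,, \]
which is positive as a minimum over the finite mode set $Q$. For any $t \in [T, T+\delta]$ with $\vx(t) \in S \setminus G \subseteq S'$, the displacement and Lipschitz bounds above hold and give $\eta_q(\vx(t)) \le -\epsilon_s$, which is the claim.

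The only real obstacle is precisely this uniformity of $\delta$ over all initial conditions and all modes, and compactness resolves it: neither $M_q$ nor $L_q$ depends on the base point in $S'$, and $Q$ is finite so the minimum is attained and positive. The one regularity point I would want to verify carefully is that $\eta_q$ is genuinely (locally) Lipschitz: both $\dot{V_q} = (\nabla V)\cdot f_q$ and $\dot{p_{S,j,q}} = (\nabla p_{S,j})\cdot f_q$ are products of smooth gradients with the Lipschitz field $f_q$ and so are Lipschitz on the compact $S'$, and a finite maximum of Lipschitz functions is Lipschitz. If one prefers to avoid Lipschitz estimates altogether, uniform continuity of $\eta_q$ on the compact $S'$ would deliver the same uniform $\delta$ by the identical argument.
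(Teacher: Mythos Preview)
Your proof is correct and follows essentially the same route as the paper: both arguments use compactness of $S$ to obtain a uniform bound on the growth rate of $\eta_q$ along mode-$q$ trajectories, and then take $\delta$ on the order of $(\epsilon - \epsilon_s)$ divided by that bound. The paper bounds the time derivative $\dot{\alpha_{i,q}}$ of each component of $\eta_q$ directly by a single constant $\Lambda$ on $S$, whereas you factor the same bound as a spatial Lipschitz constant $L_q$ times a velocity bound $M_q$ and add the $S'$ enlargement so the estimates remain valid even if the trajectory momentarily exits $S$; these are minor presentational differences rather than a genuinely different strategy.
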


\begin{proof}
	Let $T + \delta$ be the earliest time instant,  where $\eta_{q}(\vx(T+\delta)) \geq - \epsilon_s$ while at the same
        time
	\[ 
		(\forall t \in [T , T + \delta ])\  q(t) = q,\ \vx(t) \in S \setminus G \,.
	\]
	At time $T$, $\eta_q(\vx(T)) < -\epsilon$ and at time $T+\delta$, $\eta_q(\vx(T+\delta)) = -\epsilon_s$.
        Note that $\eta_q(\vx)$ is defined as $\max(\alpha_1(\vx),\ldots,\alpha_{m}(\vx))$ for some smooth
        functions $\alpha_1,\ldots, \alpha_m$. As a result, 
	Since $S$ is a bounded set, and $p$, $f_q$, and $V$ are bounded over $S$,  there exists a constant $\Lambda > 0$ s.t.
	\begin{equation}
	\label{eq:dg-leq-epsilon}
	(\forall \ \vx \in S) \ \dot{\alpha_{i,q}}\leq \Lambda \,.
	\end{equation}
	Therefore, for each $\alpha_i$, we have
        \begin{align*}
        \alpha_i(\vx(T+\delta)) & = \alpha_i(\vx(T)) + \int_{t=T}^{T+\delta} \dot{\alpha_{i,q}}(\vx(t)) dt
        \leq \alpha_i(\vx(T)) + \Lambda \delta \,.
        \end{align*}
        As a result, we conclude that
        \begin{align*}
        \eta_q(\vx(T+\delta)) &= \max_{i} \alpha_i(\vx(T+\delta)) = \alpha_{j^*}(\vx(T+\delta))
        \leq \alpha_{j^*}(\vx(T)) + \Lambda \delta \leq \eta_q(T) +\Lambda \delta\,.
        \end{align*}
	Therefore, we can conclude $- \epsilon_s < -\epsilon + \Lambda \delta \implies  \frac{\epsilon - \epsilon_s}{\Lambda} < \delta$ and there exists a fixed $\delta >  \frac{\epsilon - \epsilon_s}{\Lambda}  > 0 $ s.t.
	\[
	(\forall t \in [T, T+\delta)) \ \eta_{q}(\vx(t)) < -\epsilon_s \,. \blacksquare
	\]
\end{proof}

Eq.~\eqref{eq:controller} gives a switching strategy which respects the min-dwell time and as long as $\vx(t) \in S$, the controller guarantees $\eta_{q(t)}(\vx(t)) \leq - \epsilon_s$. I.e. for all $j \in J_q$, $\dot{V_q}(\vx(t)) \leq -\epsilon_s$ and $\dot{p_{S,j,q}}(\vx(t)) + \lambda p_{S,j}(\vx(t)) \leq -\epsilon_s$.

\begin{theorem}
\label{thm:simple}
	Given nondegenerate \ basic \ semialgebraic set $S$, a semialgebraic set $G$, and a function $V$ (satisfying Equation~\eqref{eq:rws-basic-exp}), the  control strategy defined by Eq.~\eqref{eq:controller}
        respects the  min-dwell time property and guarantees the RWS property defined by $S,G$:\ $S \implies S \scr{U} G$.
\end{theorem}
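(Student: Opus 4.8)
The plan is to prove the two claims — the minimum dwell time and the RWS guarantee — separately, leaning on Lemma~\ref{lem:delta-exists} for the former and on a combined barrier/Lyapunov argument for the latter. Throughout I would use that, by~\eqref{eq:rws-basic-exp}, every state $\vx \in S \setminus G$ admits a mode $\hat q$ with $\eta_{\hat q}(\vx) < -\epsilon$, so the switch branch of~\eqref{eq:controller} is always enabled while the trajectory is in $S\setminus G$.

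For the minimum dwell time I would argue directly from the controller definition~\eqref{eq:controller} together with Lemma~\ref{lem:delta-exists}. Whenever the controller switches at a time $T$, it switches to some $\hat q$ with $\eta_{\hat q}(\vx(T)) < -\epsilon$; Lemma~\ref{lem:delta-exists} then yields $\eta_{\hat q}(\vx(t)) < -\epsilon_s$ for all $t \in [T, T+\delta)$ as long as $\vx(t) \in S \setminus G$. Since the guard in~\eqref{eq:controller} triggers a switch only when $\eta_q(\vx) \geq -\epsilon_s$, no further switch can occur on $[T, T+\delta)$ (and if the trajectory reaches $G$ during this window, the specification is already met), establishing the dwell time $\delta$. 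The same reasoning shows that while $\vx(t) \in S\setminus G$ the active mode always satisfies $\eta_{q(t)}(\vx(t)) \leq -\epsilon_s$, and hence $\dot{V_{q}}(\vx(t)) \leq -\epsilon_s$ together with $\dot{p_{S,j,q}}(\vx(t)) + \lambda\, p_{S,j}(\vx(t)) \leq -\epsilon_s$ for every $j \in J_{q(t)}$.

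For the RWS property I would work on the maximal interval $[0, T_{\max})$ on which $\vx(t) \in S \setminus G$ (if $\vx(0) \in G$ there is nothing to prove; the dwell-time bound guarantees this trace is well defined and non-Zeno). The \emph{stay} part is a forward-invariance argument for $S$. Fixing an index $j$, on each maximal sub-interval on which a single mode $q$ is held, either $j \in J_q$, so that $\dot{p_{S,j,q}} \leq -\lambda\, p_{S,j} - \epsilon_s$ along the trajectory, or $j \notin J_q$, so that $\dot{p_{S,j,q}} \leq 0$ everywhere by the definition of $J_q$. In the first case the comparison lemma applied to $w(t) = p_{S,j}(\vx(t))$ with the (right-)derivative bound $\dot{w} \leq -\lambda w - \epsilon_s$ gives $w(t) \leq e^{-\lambda(t-t_0)} w(t_0) \leq 0$ whenever $w(t_0) \leq 0$; in the second case $w$ is nonincreasing. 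Since $\vx$ is continuous across switches and there are only finitely many switches in any bounded interval (by the dwell time), gluing these estimates preserves $p_{S,j}(\vx(t)) \leq 0$ for every $j$, so $\vx(t) \in S$ throughout $[0, T_{\max}]$. This is exactly what lets me invoke the decrease conditions without circularity: they are only used on the closed interval where membership in $S$ has already been secured by continuity up to the current time.

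Finally, for \emph{reach} I would integrate the Lyapunov decrease $\dot{V_{q}}(\vx(t)) \leq -\epsilon_s$ to obtain $V(\vx(t)) \leq V(\vx(0)) - \epsilon_s\, t$ on $[0, T_{\max})$. Since $V$ is continuous and $S$ is compact, $V$ is bounded below on $S$, forcing $T_{\max} \leq \bigl(V(\vx(0)) - \min_{\vx \in S} V(\vx)\bigr)/\epsilon_s < \infty$. Combined with the invariance above, at $t = T_{\max}$ the trajectory still lies in the closed set $S$ yet can no longer lie in $S \setminus G$, so $\vx(T_{\max}) \in G$. Taking $T = T_{\max}$ gives $\vx(t) \in S$ for all $t \in [0,T)$ and $\vx(T) \in G$, i.e.\ $S \implies S\ \scr{U}\ G$. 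The hard part will be the invariance step: one must treat states sitting on a facet $p_{S,j} = 0$, handle the split between $j \in J_q$ and $j \notin J_q$ at each switch, use right derivatives (a Dini-form comparison lemma) across switching instants, and arrange the maximal-interval argument so that the decrease inequalities and the invariance conclusion are not mutually circular.
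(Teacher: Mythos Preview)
Your proposal is correct and follows the same three-part skeleton as the paper: min-dwell time via Lemma~\ref{lem:delta-exists}, forward invariance of $S$ while in $S\setminus G$, and finite-time reach from the uniform Lyapunov decrease on the compact set $S$. The one substantive difference is in the invariance step. The paper argues pointwise at the boundary: whenever some $p_{S,j}(\vx(t)) = 0$, the active mode makes $\dot{p_{S,j,q}}(\vx(t)) < 0$ (either because $j\notin J_q$ or because $\dot{p_{S,j,q}} \leq -\epsilon_s - \lambda\cdot 0$), so the state is pushed back into $int(S)$ and cannot exit. You instead integrate the differential inequality $\dot w \leq -\lambda w - \epsilon_s$ via a comparison/Gronwall bound on each constant-mode interval and glue across switches using the dwell-time bound. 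Your route makes the handling of switching instants and the potential circularity between ``$\vx \in S\setminus G$'' and ``the decrease inequalities are in force'' explicit via the maximal-interval argument, at the price of a little more machinery; the paper's boundary argument is shorter but leaves that bookkeeping implicit. The reach argument and the use of Lemma~\ref{lem:delta-exists} for the dwell time are the same in both.
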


\begin{proof}
	As discussed, there exists a controller which respects
	the min-dwell time property. Also, the controller guarantees
	$\dot{V_q}(\vx) \leq -\epsilon_s$ and
	$(\dot{p_{S,j,q}}(\vx) + \lambda p_{S,j}(\vx) \leq
	-\epsilon_s$ (for all $j \in J_q$), as long as $\vx \in
	S \setminus G$.
	
	Assume $\vx(t)$ is on the boundary of $S$ (and not in $G$) at some time $t$. Because $S$ is assumed to be a nondegenerate basic semialgebraic set, there exists at least one $j$ s.t. $p_{S,j}(\vx(t)) = 0$. If $j \notin J_q$, by definition, $\dot{p_{S,j,q}}$ is negative for all states and  $p_{S,j,q}$ remains $\leq 0$ as long as mode $q$ is selected. 
	Otherwise ($j \in J_q$), we
	obtain $\dot{p_{S,j,q}}(\vx(t)) \leq -\epsilon_s < 0$.
	Therefore, there exits $\tau_j > 0$, s.t. $s \in (t, t
	+ \tau_j)$, we conclude that $p_{S,j,q}(\vx(s)) < 0$. As a
	result, the trajectory cannot leave the set $S$.

     Thus, the trace cannot leave $S$, unless it reaches $G$.  Now,
	we show that the trajectory cannot stay inside $S \setminus G$
	forever. By the construction of the controller, we can
	conclude time diverges (because the controller respects the
	min-dwell time property) and that $V$ decreases ($\dot{V_q}(\vx(t)) \leq -\epsilon_s$). However, the
	value of $V$ is bounded on bounded set $S \setminus
	G$. Therefore, $\vx$ cannot remain in $S \setminus G$ and the
	only possible outcome for the trace is to reach $G$. $\blacksquare$
\end{proof}

\section{RWS for Semialgebraic Safe Set}\label{sec:rws-semialgebraic}
As Habets et al~\cite{habets2006reachability} discussed, control-to-facet problems can be used to build an abstraction. Here, we demonstrate that the method described so far can be integrated in this framework to tackle more complicated problems. First, we briefly explain how the method works. For a more detailed discussion, the reader can refer to~\cite{habets2006reachability} or ~\cite{kloetzer2008fully}.

First, state space is decomposed into polytopes according to the specifications. Here, we can use basic semialgebraic sets instead of polytopes. Then, for each such a set $u$, we consider an abstract state $\ab{u}$. Furthermore, for each of its $n-1$ dimensional facet $F$, a control-to-facet problem is solved. The corresponding problem is to find a control strategy to reach $F$ starting from $u$. If the control-to-facet problem is solved successfully, then for each basic semialgebraic set $v$ with a $n-1$ dimensional facet $F' \subseteq F$, an edge from $\ab{u}$ to $\ab{v}$ (with label/action $F$) is added to the abstraction. Also for each basic semialgebraic set $u$, one can check if $u$ is a control invariant to build self loops. However, for RWS properties, self loops are redundant and we skip them here. After building the abstract system, we use standard techniques to solve the problem for finite systems. If the problem could be solved for the abstract system, then, one can design a controller.

First, for each abstract state $\ab{u}$, there is at least one action $F$ that agrees with the winning strategy for the abstract system. Let that action be $\ac{\ab{u}}$. The idea is to implement transition $\ac{\ab{u}}$, using controller $\ctrl_{u,\ac{\ab{u}}}$ for the corresponding control-to-facet problem~\cite{habets2006reachability}. Formally, the controller can be defined as follows:

\begin{equation} \label{eq:controller-general}
	\ctrl(q, \vx) = \begin{cases}
		\ctrl_{u_1,\ac{\ab{u_1}}}(q, \vx)  \hspace{1cm} \vx \in u_1 \\
		\vdots \\
		\ctrl_{u_s,\ac{\ab{u_s}}}(q, \vx)  \hspace{1cm} \vx \in u_s \,.
	\end{cases}
\end{equation}

When $\vx$ belongs to multiple sets, one can break the tie by some ordering,
where states in the winning set have priorities. It is worth mentioning that
combining these controllers together, does not produce
any Zeno behavior as it is guaranteed that each abstract state is visited only
once for RWS properties. However, superdense switching is
possible as two facets of a polytope can get arbitrarily close.

If one is interested in LTL properties (not just reach-while-stay) or 
min-dwell time property, one
possible solution is to use \emph{fat} facets, where the target sets are $n$
dimensional goal sets. This extends the domain 
of the control-to-facet problem to adjacent basic semialgebraic sets as well.
Also, it allows the controller to continue using current sub-controller 
for some minimum time (if min-dwell time requirement is not met), before
changing the sub-controller (at the switch time).

\begin{example} \label{ex:obstacle-1} Consider again the system from Example~\ref{ex:basic}, with the addition of  some obstacles~\cite{nilssonincremental}. More precisely, as shown in Fig~\ref{fig:obstacle1-basic-a}, safe set is $S = S_0 \setminus (O_1 \cup O_2)$. First, the safe set is decomposed into four basic semialgebraic sets, which are shown with $R_0$ to $R_3$ in Fig.~\ref{fig:obstacle1-basic-a}.
{\makeatletter
\let\par\@@par
\par\parshape0
\everypar{}
\begin{wrapfigure}{r}{0.5\textwidth}
  \centering
  \begin{subfigure}{0.2\textwidth}
  \includegraphics[width=1\linewidth]{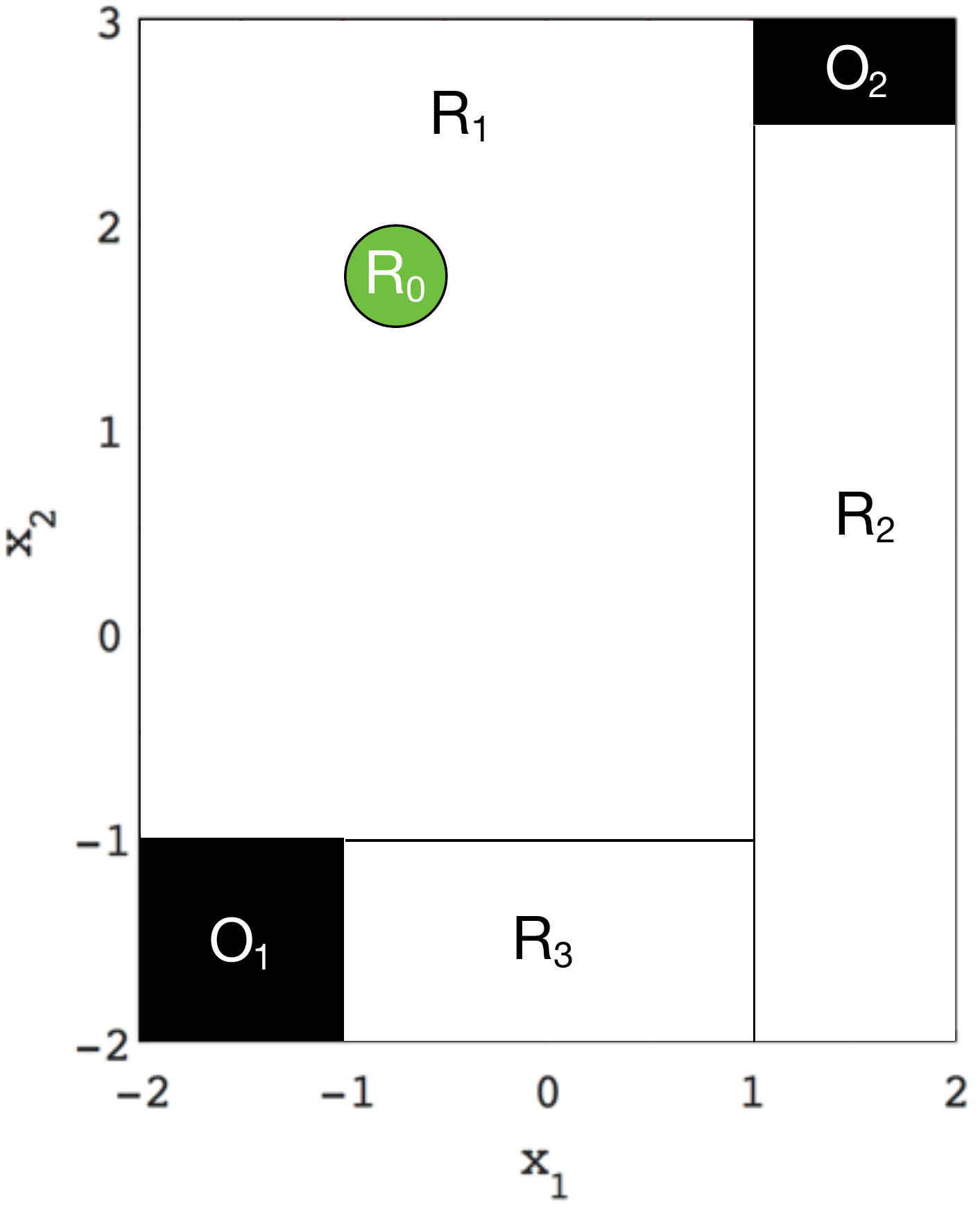}
  \caption{}
  \label{fig:obstacle1-basic-a}
  \end{subfigure} \qquad
  \begin{subfigure}{0.2\textwidth}
  \includegraphics[width=1\linewidth]{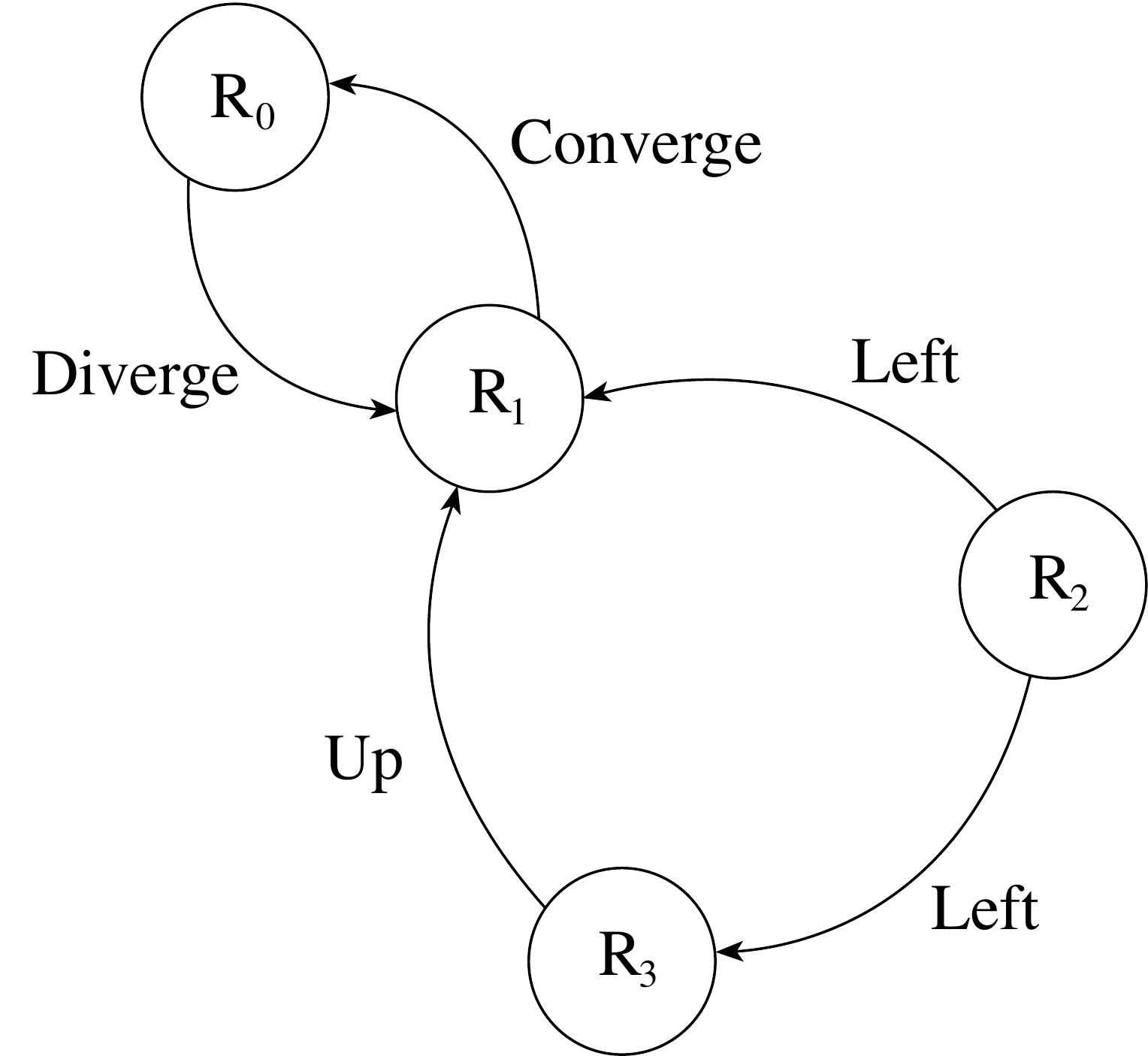}
  \caption{}
  \label{fig:obstacle1-basic-b}
  \end{subfigure}
\caption{(a) Schematic view of state decomposition. (b) Finite abstraction for the original problem.} 
\end{wrapfigure} 
$R_0$ is the target set. Next, we build a transition relation between four abstract states, representing four basic semialgebraic sets. This is done by solving seven RWS problems for basic semialgebraic sets. For $R_1$ to $R_0$, we use a quadratic template for $V$, and for other problems, we use linear template. The abstract system is shown in Fig.~\ref{fig:obstacle1-basic-b}. Next, the problem is solved for the abstract system. The solution to the abstract system is simple: if the state is in $R_2$, the controller uses the left facet to reach $R_1$ or $R_3$. Otherwise, if the state is in $R_3$, the controller uses the upper facet to reach $R_1$ and finally, if the state is in $R_1$, the controller makes sure the state reaches $R_0$.
\par}
\end{example}


\begin{example} \label{ex:unicycle}
This example is a path planning problem for the unicycle~\cite{rungger2016scots}. Projection of safe set on $x$ and $y$ yields a maze. The target set is placed at the right bottom corner of the maze (Fig.~\ref{fig:unicycle}).
 Using specification-guided technique, we modeled the system with 53 polyhedra. Each polyhedron is treated as a single state and a transition relation is built by solving 113 control-to-facet problems. Then, the problem is solved over the finite graph. The total computation took 1484 seconds. The figure also shows a single trajectory of the closed loop system.
\end{example}

\begin{figure}
\centering
\includegraphics[width=0.4\textwidth]
	{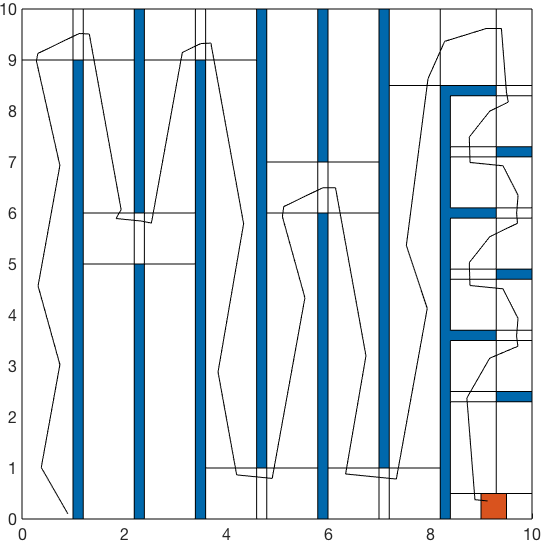}
\caption{Region $G$ is shown shaded in Orange, and unsafe regions are shown in blue. An execution trace of the car is shown for $x$ and $y$ variables.}
\label{fig:unicycle}
\end{figure}

\begin{example} \label{ex:unicycle-4D}
	This example is similar to Example~\ref{ex:unicycle}, except for the fact that there is no direct control over the angular velocity. More precisely, only the angular acceleration is controllable and the system would have the following dynamics
	$\dot{x} = u_{1} \cos (\theta),\,
	\dot{y} = u_{1} \sin (\theta),\,
	\dot{\theta} = \omega,\,
	\dot{\omega} = u_{2}\,.
	$
	Also, we assume $\omega \in [-1, 1]$. By changing the coordinates one can use $r = \sqrt{x^2 + y^2}$, $z_1 = x\cos (\theta)+y\sin (\theta)$ and $z_2 = y\cos (\theta)-x\sin (\theta)$ to define position and angle of the car(cf.~\cite{liberzon2012switching} for details). Then, we use the following template $V(x, y, \theta, \omega) = c_1 r^2 + c_2 z_1 + c_3 z_2\omega + c_4 \omega^2$,
	where the origin is located just outside of the target facet. Using this template, we find control certificates for all 113 control-to-facet problems in 5296 seconds.
\end{example}

\section{Initialized Reach-While-Stay}
So far, we discussed uninitialized RWS specifications ($S$ $ \implies
S \scr{U} G$). In these systems, we use boundary of safe set as
barrier.  However, as pointed out by Lin et al.~\cite{lin2007reachability},
this may not be the case. Now, we consider the initialized problem
for a given initial set $I$ ($I \implies S \scr{U} G$). To avoid 
technical difficulties,  we assume that $I \subseteq int(S)$. The
solution is to create a composite barrier that is formed by the
boundary of $S$ as well as other a priori unknown barrier functions.

\paragraph{Barrier Functions:} We recall that for a control barrier function~\cite{taly2010switching,xu2015robustness}, the following
conditions are considered 
\begin{equation} \label{eq:cbc-original}
\begin{array}{l}
\vx \in \partial S \implies B(\vx) > 0 \\
\vx \in I \implies B(\vx) < 0	\\
\vx \in S \implies \left( B(\vx) = 0  \implies (\exists q) \dot{B_q}(\vx) < -\epsilon \right) \,.
\end{array}
\end{equation}

This ensures that $B(\vx) = 0$ is a barrier and $\partial S$ is unreachable.
Eq.~\eqref{eq:cbc-original}, combined with the smoothness of $B$ and $f_q$ ensures
that as soon as the state is sufficiently ``close'' to the barrier,
 it is possible to choose a control mode that ensures the local
decrease of the $B$. 

The condition in Equation~\eqref{eq:cbc-original} can be
encoded into the CEGIS framework. However, the presence of the
equality $ B(\vx) = 0$ poses practical problems. In particular,
it requires for each candidate $B_\vc$, to find
a counterexample $\vx$ s.t. $B_\vc(\vx) \not= 0$. Unfortunately, such
an assertion is easy to satisfy, resulting in the procedure always
exceeding the maximum number of iterations permitted.

Again, we find that the following relaxation of the third condition is particularly effective in our experiments
\begin{equation}
\label{eq:cbc-safety}
	\begin{array}{l}
	\vx \in \partial S\ \implies\ B(\vx) > 0 \\
	\vx \in  I\ \implies\ B(\vx) < 0 \\
	\vx \in S \implies \bigvee_q \left( \begin{array}{c}
		\dot{B}_q(\vx) - \lambda B(\vx) < -\epsilon 
		\lor
		\dot{B}_q(\vx) + \lambda B(\vx) < -\epsilon
	\end{array} \right) \,,
	\end{array}
\end{equation}
for some constant $\lambda$.

Intuitively, by choosing $\lambda = 0$, the condition is similar to
that of Lyapunov functions, whereas as
$|\lambda| \rightarrow \infty$, the condition gets less conservative
and in the limit, it is equivalent to the original condition. In fact,
for smaller $|\lambda|$ CEGIS terminates
faster, but at the cost of missing potential solutions. On the other
hand, using larger $|\lambda|$, is less conservative at the cost of
CEGIS timing out. We also note that this formulation is less conservative than the one introduced by Kong et al.~\cite{kong2013exponential} as our formulation uses two exponential conditions which only forces decrease of value of $B$ around $B^* = \{\vx\  |\  B(\vx) = 0\}$.

To solve the RWS in general form, we define a finite set of barriers $\scr{B}$ with the following conditions:
\begin{equation} \label{eq:rws-general-1}
\begin{array}{l}
\vx \in \partial S \implies \bigvee_{B \in \scr{B}} B(\vx) > 0 \\
\vx \in I \implies \bigwedge_{B \in \scr{B}} B(\vx) < 0 \,.
\end{array}
\end{equation}

Also for each mode $q$, $\scr{B}_q$ is defined as $\scr{B}_q = \{B \in \scr{B}| (\exists \vx) \dot{B_q}(\vx) > 0\}$. Then, existence of a proper mode can be encoded as the following:

\begin{equation}\label{eq:rws-general-2}
\begin{array}{l}
	\vx \in S \setminus G \implies
	\left(
	\begin{array}{c}
	\left( \dot{V_q}(\vx) < -\epsilon \right) \ \land
	\bigwedge_{B \in \scr{B}_q}  \left( 
	\begin{array}{c}
	\dot{B_{q}}(\vx) + \lambda B(\vx) < -\epsilon \lor \\
	\dot{B_{q}}(\vx) - \lambda B(\vx) < -\epsilon	\end{array}
	\right)
	\end{array} 
	\right) \,.
\end{array}
\end{equation}

\begin{theorem}
\label{thm:simple}
	Given nondegenerate \ basic \ semialgebraic set $S$, semialgebraic sets $I$ and $G$, function $V$, and a non-empty set of functions $\scr{B}$ (satisfying Equation~\eqref{eq:rws-general-1} and ~\eqref{eq:rws-general-2}), there is a control strategy that respects the  min-dwell time property and guarantees the RWS property:\ $I \implies S \scr{U} G$.
\end{theorem}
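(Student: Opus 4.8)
The plan is to follow the same three-part template used for the uninitialized case: (i) extract an explicit switching law from the certificate and show it has a minimum dwell time, (ii) show the closed loop can never leave $S$ (\emph{safety}), and (iii) show it must reach $G$ in finite time (\emph{liveness}). First I would build the controller exactly as in Eq.~\eqref{eq:controller}, but with a selection function that also accounts for the barriers. For each mode $q$ I set
\[
\eta_q(\vx) := \max\left( \dot{V_q}(\vx),\ \max_{B \in \scr{B}_q} \min\bigl( \dot{B_q}(\vx) + \lambda B(\vx),\ \dot{B_q}(\vx) - \lambda B(\vx) \bigr) \right),
\]
(with the inner term set to $-\infty$ for $B \notin \scr{B}_q$), so that $\eta_q(\vx) < -\epsilon$ is precisely the $q$-th disjunct of Eq.~\eqref{eq:rws-general-2}: the outer $\max$ encodes the conjunction, and the inner $\min$ encodes each disjunction $\dot{B_q}\pm\lambda B<-\epsilon$, equalling $\dot{B_q}-\lambda|B|$. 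As in Eq.~\eqref{eq:controller}, the controller persists in its current mode until $\eta_q(\vx)\geq-\epsilon_s$ and then switches to any $\hat q$ with $\eta_{\hat q}(\vx)<-\epsilon$, which exists by Eq.~\eqref{eq:rws-general-2} whenever $\vx\in S\setminus G$.

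The minimum-dwell-time guarantee is obtained by re-running Lemma~\ref{lem:delta-exists}. Although $\eta_q$ is now a maximum of smooth functions and of pointwise minima of pairs of smooth functions, the one-sided growth estimate survives: each constituent $\alpha_i$ (one of $\dot{V_q}$, $\dot{B_q}+\lambda B$, $\dot{B_q}-\lambda B$) is smooth with $\dot{\alpha_{i,q}}\leq\Lambda$ on the compact set $S$, and a pointwise minimum $m=\min(\alpha_a,\alpha_b)$ still obeys $m(\vx(T+\delta))\leq m(\vx(T))+\Lambda\delta$, as seen by bounding $m(\vx(T+\delta))$ by the single constituent that attains the minimum at time $T$. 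Taking the outer maximum yields $\eta_q(\vx(T+\delta))\leq\eta_q(\vx(T))+\Lambda\delta$, so the same $\delta>(\epsilon-\epsilon_s)/\Lambda$ works. Hence the controller respects a minimum dwell time and $\eta_{q(t)}(\vx(t))\leq-\epsilon_s$ holds throughout any interval on which $\vx(t)\in S\setminus G$.

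The genuinely new ingredient is safety, which no longer follows from facet conditions but from the composite barrier. Since $\vx(0)\in I\subseteq int(S)$, Eq.~\eqref{eq:rws-general-1} gives $B(\vx(0))<0$ for every $B\in\scr{B}$. I would show by a Nagumo-type forward-invariance argument that $B(\vx(t))\leq 0$ is maintained for every $B$ as long as $\vx(t)\in int(S)\setminus G$. The key pointwise fact is that near its zero level the active mode $q=q(t)$ drives each $B$ inward: if $B\in\scr{B}_q$, then $\eta_q(\vx(t))\leq-\epsilon_s$ forces $\dot{B_q}(\vx(t))\leq-\epsilon_s+\lambda|B(\vx(t))|$, which is $\leq 0$ once $|B(\vx(t))|<\epsilon_s/\lambda$; if $B\notin\scr{B}_q$, then by definition of $\scr{B}_q$ we have $\dot{B_q}\leq 0$ at every state. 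In both cases $\dot{B_q}(\vx(t))\leq 0$ whenever $|B(\vx(t))|<\epsilon_s/\lambda$, for whichever mode is active, so no barrier can cross from negative to positive. Because Eq.~\eqref{eq:rws-general-1} forces at least one $B$ to be positive on $\partial S$, the trajectory can never reach $\partial S$ and therefore stays in $int(S)\subseteq S$. Formally I would set $t^\star=\inf\{t:\vx(t)\notin int(S)\setminus G\}$, run this argument on $[0,t^\star)$, and conclude that the interval can terminate only by the trajectory entering $G$.

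Liveness is then identical to the uninitialized case: on $int(S)\setminus G$ the certificate satisfies $\dot{V_q}(\vx(t))\leq-\epsilon_s$, the dwell time forces physical time to diverge, and $V$ is bounded below on the compact set $S$; hence $\vx$ cannot remain in $int(S)\setminus G$ forever and must reach $G$ at some finite $T$ with $\vx(t)\in S$ for all $t\in[0,T)$, which is exactly $I\implies S\ \scr{U}\ G$. The main obstacle is the safety step: one must carefully handle the one-sided (right) derivative at a would-be barrier crossing and the fact that the active mode may switch arbitrarily often, subject only to the dwell time, while $B$ approaches zero, together with the merely non-strict inequality $\dot{B_q}\leq 0$ available for $B\notin\scr{B}_q$. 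The clean way around this is the observation above that $\dot{B_q}(\vx(t))\leq 0$ holds for the active mode on the whole neighborhood $|B|<\epsilon_s/\lambda$ of the zero set independently of the switching schedule, which is the subtangentiality condition needed to make $\{\,\bigwedge_{B\in\scr{B}}B\leq 0\,\}$ forward invariant.
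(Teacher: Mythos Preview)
The paper states this theorem without proof, so there is no argument in the text to compare yours against directly. Your proposal is the natural and correct extension of the paper's proof of the uninitialized theorem. You correctly generalize $\eta_q$ to encode the nested conjunction--disjunction structure of Eq.~\eqref{eq:rws-general-2} via an outer $\max$ and inner $\min$ (noting that $\min(\dot{B_q}+\lambda B,\dot{B_q}-\lambda B)=\dot{B_q}-\lambda|B|$), and you correctly observe that the one-sided growth bound $\alpha(\vx(T+\delta))\leq\alpha(\vx(T))+\Lambda\delta$ is preserved under both $\max$ and $\min$ composition, so Lemma~\ref{lem:delta-exists} carries over verbatim to yield the minimum dwell time. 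The liveness step is identical to the uninitialized case.

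The only genuinely new ingredient is safety, and your treatment of it is sound. From $\eta_{q(t)}(\vx(t))\leq -\epsilon_s$ you extract $\dot{B_{q(t)}}(\vx(t))\leq -\epsilon_s+\lambda|B(\vx(t))|$ for every $B\in\scr{B}_{q(t)}$, giving $\dot{B_{q(t)}}<0$ on the strip $|B|<\epsilon_s/\lambda$; for $B\notin\scr{B}_{q(t)}$ you fall back on the defining property $\dot{B_q}\leq 0$ everywhere. Either way the active mode renders $t\mapsto B(\vx(t))$ nonincreasing on that strip (it is absolutely continuous with nonpositive right derivative off a countable set of switch times), which precludes any crossing of the zero level from below. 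Combined with Eq.~\eqref{eq:rws-general-1} this rules out $\vx(t^\star)\in\partial S$, and your $t^\star$-argument closes correctly. The proposal is complete and would serve as the missing proof.
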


To simplify these constraints and reduce the number of unknowns, one can use some of $p_{S,i}$'s to fix some of these barriers, which yields conditions similar to the ones used for the uninitialized problem. This trick is demonstrated in the following example.

\begin{example}\label{ex:dcdc}
	This example is taken from~\cite{mouelhi:hal-00743982}, in which a DC-DC converter is modeled with two variables $i$ and $v$. 
{\makeatletter
\let\par\@@par
\par\parshape0
\everypar{}
	\begin{wrapfigure}{r}{0.4\textwidth}
\begin{center}
\includegraphics[width=0.4\textwidth]
	{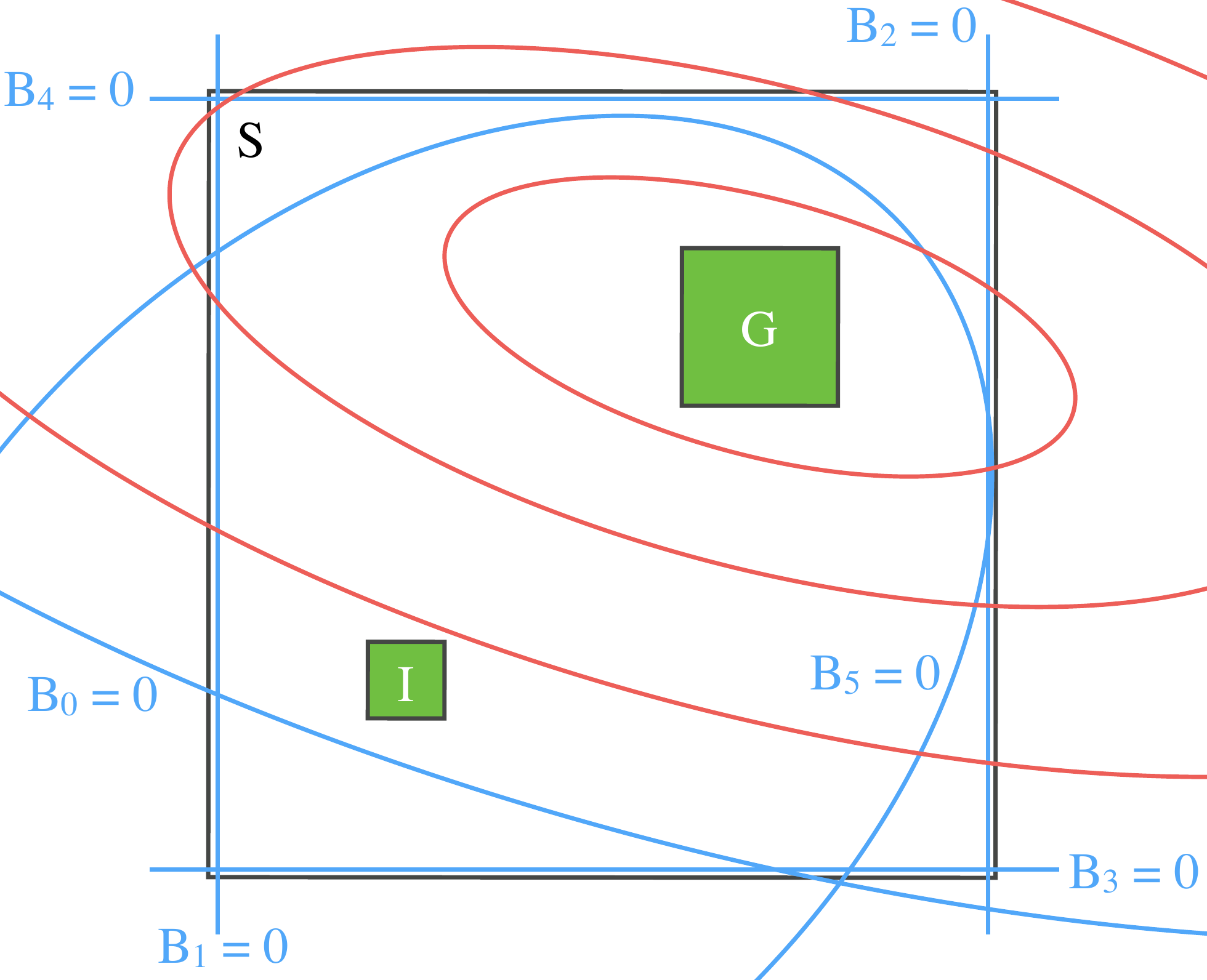}
\caption{The blue lines are the barriers and the red lines are level-sets of the Lyapunov function.}
\label{fig:ex-general}
\end{center}
\end{wrapfigure}
	The system has two modes $q_1$ and $q_2$, with the following dynamics:
	\begin{align*}
	q_1:& \begin{cases}
		\dot{i} = 0.0167 i + 0.3333 \\
		\dot{v} = -0.0142 v
	\end{cases} \\ 
	q_2:& \begin{cases}
		\dot{i} = - 0.0183 i -0.0663 v + 0.3333 \\
		\dot{v} = -0.0711 i -0.0142 v \,.
	\end{cases}
	\end{align*}
The safe set is $S: [0.65, 1.65] \times [4.95, 5.95]$ and the goal set is $G: [1.25, 1.45] \times [5.55, 5.75]$.
We assume initial set to be $I: [0.85, 0.95]\times[5.15, 5.25]$ (Fig.~\ref{fig:ex-general}).
Then, we use 5 barriers $B_0,\ldots, B_4$. Using boundaries of $S$, we choose $B_1,\ldots, B_4$ as follows:
\begin{align*}
B_1 =& 0.65 - i + \epsilon_b & \ 
&B_2 = 1.65 - i + \epsilon_b \\ 
B_3 =& 4.95 - v + \epsilon_b & \ 
&B_4 = 5.95 - v + \epsilon_b \,,
\end{align*}
\par}
where $\epsilon_b > 0$ is small enough that $I \subset int(\bigcap_{i=1}^{4} B_i)$. In this case, we choose $\epsilon_b = 0.01$. Notice that such $\epsilon_b$ always exists by the definition. Next, we assume $B_0 = V$ and both have the following template:
\[ B_0 = V : c_1 (i-1.35)^2 + c_2 (i-1.35) (v-5.65) + c_3 (v-5.65)^2 - 1 \,.\]
This template is chosen in a way that $V$ is a quadratic function with minimum value of $-1$ for the point of interest $i = 1.35$, $v = 5.65$. So far, we used these tricks to reduce the number of unknowns for barriers. However, our method fails to find a certificate. Next, we add one more barrier ($B_5$) to the formulation and we use the following template for $B_5 : c_4 (i-0.9)^2 + c_5(i-0.9)(v-5.2) + c_6(v-5.05)^2 - 1$, which is a quadratic function with minimum value of $-1$ for initial point $i = 0.9$, $v = 5.2$. This time, we can successfully find a control certificate. The final barriers and level-sets of the Lyapunov function is shown in Fig.~\ref{fig:ex-general}.
\end{example}

\paragraph{Comparison:} While abstraction based methods can provide a near optimal solution (are relatively complete), these methods can be computationally expensive. On the other hand, our method is a Lyapunov-based method and the solution is not necessarily (relatively) complete. For example, our approach assumes that control certificates with a given form (that is given as input by user) exist .  As such, the existence of such certificates is not guaranteed and thus, our approach lacks the general applicability of a fixed-point based synthesis. Also, for initialized problems our method needs an initial set as input, while for the abstraction based methods, maximum controllable region can be obtained without the need for specifying the initial set.
However, our method is relatively more scalable thanks to recent development in SMT solvers. Here, for the sake of completeness, we provide a brief comparison with SCOTS toolbox~\cite{rungger2016scots} for the examples provided in this article.
To compare Example~\ref{ex:basic-4D}, we use \emph{fat} facet and assume target set has a volume (otherwise, because of time discretization, SCOTS cannot find a solution). More precisely, we use target set $[1, 1.2]\times[-1, 1]^3$ instead of $[1,1]\times[-1,1]^3$. 

All the experiments are ran on a laptop with Core i7 2.9 GHz CPU and 16GB of RAM. The results are reported in Table~\ref{tab:scots}. We also note that if we use larger values for SCOTS parameters, SCOTS fails to solve these problems (initial set is not a subset of controllable region).  Table~\ref{tab:scots} shows that SCOTS performs much better for Example~\ref{ex:obstacle-1} and ~\ref{ex:dcdc} for which there are only $2$ state variables. For Example~\ref{ex:unicycle}, both methods have similar performances. And for Example.~\ref{ex:basic-4D} and Example.\ref{ex:unicycle-4D}, which have $4$ state variables, our method is faster.

\begin{table}[t!]
\caption{Results of Comparison with SCOTS on examples}
\label{tab:scots}

\textbf{Legend}: $n$: \# state variables,
  $itr$ : \# iterations, Time: total computation time, $\eta$: state discretization step, $\tau$: time step. All timings are in seconds and rounded, TO: timeout ($> 10$ hours).
\begin{center}
{
\begin{tabular}{||l|l||l|l|r|r||l|r||}
\hline
\multicolumn{2}{||c||}{Problem} & \multicolumn{4}{c||}{SCOTS}
& \multicolumn{2}{c||}{CEGIS} 
\tabularnewline \hline
ID & $n$ & $\eta$ & $\tau$ & $itr$ & Time  & $\delta$ & Time
\tabularnewline \hline
Ex.~\ref{ex:obstacle-1} & 2 &
0.16$^2$ & 0.12 &
18 & 0 & 10$^{-4}$ & 3 
\tabularnewline \hline
Ex.~\ref{ex:dcdc} & 2 &
0.01$^2$ & 1.0 &
106 & 1 & 10$^{-4}$ & 39
\tabularnewline \hline
Ex.~\ref{ex:unicycle} & 3 &
0.2$^2$$\times$0.1 & 0.3 &
404 & 989 & 10$^{-4}$ & 1484
\tabularnewline \hline
Ex.~\ref{ex:basic-4D} & 4 &
0.03$\times$0.1$^3$ & 0.005 &
48 & 304 & 10$^{-5}$ & 3
\tabularnewline \hline
Ex.~\ref{ex:unicycle-4D} & 4 &
0.1$^2\times$0.05$^2$ & 0.3 &
\multicolumn{2}{|c||} {TO} & 10$^{-4}$ & 5296
\tabularnewline \hline
\end{tabular}\\
}
\end{center}
\end{table}

\section{Conclusions}
In this paper, given a switched system, we addressed controller
synthesis problems for RWS with composite barriers. Specifically, we
addressed uninitialized problems which are useful for building an
abstraction, as well as initialized problems. For each problem, we
provided sufficient conditions in terms of ``existence of a control
certificate".  Also, we demonstrated that searching for a control
certificate can be encoded into constrained problems and solving
these problems is computationally feasible.  In the future, we wish to
investigate how the initialized RWS problems can be extended to be
used along fixed-point computation based techniques as it allows more
flexible switching strategies.

\section*{Acknowledgments}
This work was funded in part by NSF under award numbers SHF 1527075 and CPS 1646556. All opinions expressed
are those of the authors and not necessarily of the NSF.

\bibliographystyle{eptcs}
\bibliography{refs}
\end{document}